\newtheorem{thm}{Theorem}
\newtheorem{defn}{Definition}
\newtheorem{rem}{Remark}
\title{\LARGE \bf
An Exact Characterisation of Flexibility in\\ Populations of Electric Vehicles
}
\author{Karan Mukhi and Alessandro Abate% <-this % stops a space
\thanks{This work was supported by Réseau de Transport d'Électricité.}% <-this % stops a space
\thanks{Karan Mukhi and Alessandro Abate are with the Department of Computer Science, University of Oxford, UK, OX1 3QG
        {\tt\small \{karan.mukhi,alessandro.abate\}@cs.ox.ac.uk}}%
}
\begin{document}

\maketitle
\thispagestyle{empty}
\pagestyle{empty}
% \thispagestyle{empty}
% \pagestyle{empty}

%%%%%%%%%%%%%%%%%%%%%%%%%%%%%%%%%%%%%%%%%%%%%%%%%%%%%%%%%%%%%%%%%%%%%%%%%%%%%%%%
\begin{abstract}
Increasing penetrations of electric vehicles (EVs) presents a large source of flexibility, which can be used to assist balancing the power grid. The flexibility of an individual EV can be quantified as a convex polytope and the flexibility of a population of EVs is the Minkowski sum of these polytopes. In general computing the exact Minkowski sum is intractable. However, exploiting symmetry in a restricted but significant case, enables an efficient computation of the aggregate flexibility. This results in a polytope with exponentially many vertices and facets with respect to the time horizon. We show how to use a lifting procedure to provide a representation of this polytope with a reduced number of facets, which makes optimising over more tractable. Finally, a disaggregation procedure that takes an aggregate signal and computes dispatch instructions for each EV in the population is presented. The complexity of the algorithms presented is independent of the size of the population and polynomial in the length of the time horizon. We evaluate this work against existing methods in the literature, and show how this method  guarantees optimality with lower computational burden than existing methods. 
\end{abstract}

%%%%%%%%%%%%%%%%%%%%%%%%%%%%%%%%%%%%%%%%%%%%%%%%%%%%%%%%%%%%%%%%%%%%%%%%%%%%%%%%
\section{INTRODUCTION}

In the pursuit of a more sustainable energy system, increasing amounts of renewable energy sources are being integrated onto electric grids. Many of these sources of energy are intermittent in nature, causing generation to be variable and uncertain. 
% In order to sustain a balance on the grid 
Due to this, sources of flexibility are required to maintain an economic and reliable grid. Concurrently, mass adoption of electric vehicles (EVs) is set to occur within the next decade. Analysis of EV charging patterns show they are typically plugged in for more time than they require to charge \cite{Lee2019ACN-Data:Dataset}, and so there is a set of charging profiles they may take whilst satisfying their energy requirements.
% there is an amount of flexibility in the time at which they consume energy. 
Hence, by controlling their charging profiles, populations of EVs present an opportunity to procure vast amounts of flexibility and aid in balancing the grid.

Current systems for load control entail system operators issuing commands directly to large loads. 
However, this is not practicable when controlling the load of millions of distributed devices.
To enable scalability, hierarchical load control architectures have been proposed, whereby aggregators collate the flexibility from individual loads and present them to the system operator \cite{Callaway2011AchievingLoads}.
Aggregators maximise the utility of services they provide to the system operators, whilst satisfying the constraints of individual loads.
They are thus tasked with aggregating individual flexibility sets as efficiently as possible, as highlighted in 
\cite{EsmaeilZadehSoudjani2015AggregationAbstractions, Hao2015AggregateLoads,Hao2014CharacterizingLoads,Barot2017APolytopes, Trangbaek2011ExactControl,Zhao2017ALoads,Muller2019AggregationResources,Nazir2018InnerResources,Kundu2018ApproximatingApproach}.

The set of charging profiles of an EV that respect the energy requirements of the EV can be represented as a convex polytope. Geometrically, the aggregate flexibility of a population of EVs is the Minkowski sum of these polytopes. The Minkowski sum can be calculated by computing the convex hull of the sum of all sets of vertices of the summands \cite{Weibel2007MinkowskiComputation}. In general, this becomes infeasible when the dimensions of the polytopes, or the number of polytopes being summed, is large.
Consequently, recent work has focused on providing  approximations to the aggregate flexibility set. 
In \cite{Barot2017APolytopes} methods for computing outer approximations are presented, however, whilst guaranteeing the inclusion of optimal points, outer approximations also contain infeasible ones. This could cause penalties if an aggregator is contractually obliged to follow certain aggregate profiles.
The authors of \cite{Muller2019AggregationResources} use zonotopes to find inner approximations of the flexibility sets, allowing for an efficient computation of their Minkowski sum. Homothets also lead to an efficient Minkowski sum computation, and are used in \cite{Zhao2017ALoads} to find sufficient (maximum inner) and necessary conditions (minimum outer approximations) on the aggregate flexibility. This is extended in \cite{Nazir2018InnerResources} to provide a better approximation of the flexibility sets as the union of homothets. The Minkowski sum is viewed as a projection operator in \cite{Zhao2016ExtractingApproximation}, by finding the maximum homothet within the full dimensional polytope, they approximate this projection.
However, when optimising over inner approximations, by definition, optimality is not guaranteed. Furthermore the approximations may be so conservative that nominal charging profiles are not included in them, and so controlling the population may increase costs \cite{Ozturk2022AggregationAlgorithms}.

% % ''However, the dimensions of an optimal scheduling problem increase with the EV number leading to computational complexity
% The flexibility of an EV can be characterised as the set of charging profiles that it may take, whilst satisfying owner's charging constraints. This a convex polytope. The aggregate 

% Approximation methods require the computation of approximations of the individual flexibility sets, although may be computed in parallel and so overall computation time can be scalable.. 
% % Discharging not allowed valid assumption [Imperial]
% Finally to add to this we submit that vehicles to grid capabilities may not be necessary to aide in balancing the grid in the future, furthermore this will mitigate the cycling of EV batteries prolonging their life.

In this paper, we restrict ourselves to the case in which only charging is permitted. Because of this, we can leverage symmetries in the flexibility sets to provide an exact characterisation of the aggregate flexibility set. By considering the vertices of flexibility sets for EVs with homogeneous arrival and departure times we deduce that the sets are \textit{permutahedra}, a class of polytopes which permit efficient Minkowski addition. Permutahedra are closed under Minkowski summation, thus the aggregate flexibility set of these EVs is a permutahedra. The flexibility sets of EVs with heterogeneous arrival and departure times lie in different subspaces, hence aggregating them entails summing over permutahedra that belong to different subspaces.
By making arguments about which sets of vertices of the summands sum to give vertices of the summation,
we arrive at an exact characterisation of the aggregate flexibility set. The characterisation describes a polytope with exponentially many vertices and facets, in the length of the time horizon, making it intractable to optimise over. However, permutahedra permit an extended formulation that enables them to be optimised over in polynomial time \cite{Kaibel2011ExtendedOptimization}.
Therefore, we revert to the characterisation of the aggregate flexibility set as the Minkowski sum of permutahedra in various subspaces and reformulate the optimisation to a more tractable problem. Finally, we show how the solution to this problem can be used to derive a scheduling policy that disaggregates the optimal aggregate charging profile amongst the EVs in the population. In summary, the contributions of this paper are threefold. First, we describe an exact characterisation of the aggregate flexibility of populations of EVs. Second, we show how to efficiently optimise over this set. Finally, we provide a method of disaggregating an aggregate charging profile.

The rest of this paper is structured as follows: in Section \ref{sec:problem} we formalise the EV charging model and formulate the aggregation problem.
Section \ref{sec:agg} presents the exact characterisation of the aggregate flexibility set. 
In Section \ref{sec:opt} we propose a more tractable reformulation of the optimisation problem.
Section \ref{sec:disagg} describes how a scheduling policy can be derived from the solution of this problem. 
Case studies that benchmark the complexity and illustrate the utility of this work are provided in Section \ref{sec:case_studies}. 
Finally, we make conclusions in Section \ref{sec:conclusion}.

\section{PROBLEM FORMULATION}\label{sec:problem}
We consider an aggregator that has direct control over the charging profile of a population of $K$ EVs.
The \textit{charging profile}, $u(\tau)$, is defined as the power consumption of an EV at time $\tau \in \mathbb R^+$. 
We consider a finite and discrete time horizon comprising $n$ steps, each with duration $\delta$, 
and let $t \in \mathbb{T} := \{1,2,...,n\}$ index the time interval $[(t-1)\delta, t\delta)$. 
(In the following, we set $\delta = 1$ without loss of generality.) 
We thus consider the case where the charging profile $u(\tau)$ is a piecewise constant function 
\begin{equation*}
    u(\tau) = u_t \quad \forall \tau \in [(t-1), t) \quad \forall t \in \mathbb{T}.
\end{equation*} 
and can thus be embedded as a point $u \in \mathbb{R}^n$. 

The energy requirements of an EV are parameterised by the tuple $(E, a, d, m)$.  
The EV is defined as \textit{active} over the interval $\mathbb{C} := \{a, d-1\} \subseteq \mathbb{T}$, 
namely during each time step between its arrival, $a \in \{1,...,n\}$, and its departure, $d \in \{2,...,n+1\}$.  
Since $a<d$, let $p = d-a \leq n$ denote the number of time steps during which the EV is active.  
Whilst it is active, the EV is able to draw power, up to its power capacity $m$. 
Instead, when \textit{inactive} it does not consume any power. 
This imposes the following power constraints:
\begin{subequations}\label{eq:power_constraints}
    \begin{alignat}{2}
        &u_t = 0 \qquad &&\forall t \in \mathbb{T} \setminus \mathbb{C} , \label{constraint:def_loada}\\
        0 \leq &u_t \leq m \qquad &&\forall t \in\mathbb{C} . \label{constraint:def_loadb}
    \end{alignat}
\end{subequations}
The \textit{state of charge} (SoC) of the EV, $s_t$, is given by 
\begin{equation*}
    s_t = s_{t-1} + u_t,
\end{equation*}
where we assume $s_{a} = 0$ without loss of generality. 
As a requirement, the EV \textit{must} be fully charged at its departure time, $s_{d} = E$. 
Incorporating the power constraints from \eqref{eq:power_constraints}, 
this leads to the energy constraint 
\begin{equation}\label{eq:energy_constraints}
    \sum_{t\in \mathbb{C} } u_t = E. 
\end{equation}
As a result the set of feasible charging profiles that can be taken by an EV whilst respecting its energy requirements is given by
\begin{equation}\label{eq:resource_polytope}
    \mathcal{P} := \left\{ u \in \mathbb{R}^n \; \middle\vert \;
    \begin{array}{@{}cl}
        u_t = 0,  \quad &\forall t \in \mathbb{T}\setminus \mathbb{C};  \\
        0 \leq u_t \leq m,   \quad &\forall t \in \mathbb{C};  \\
       \sum_{t\in \mathbb{C} } u_t = E\\
    \end{array} 
    \right\},
\end{equation}
we will refer to this as the \textit{flexibility set} of an EV.
Note that the set $\mathcal{P}$ is non-empty for $E \leq (d-a)m$, whereas it is a singleton when the equality holds.

Let us now focus on a population of EVs. Let the superscript $i \in \mathbb{A} := \{1,...,K\}$ index the set of EVs, such that $u^i$ denotes the charging profile and $\mathcal{P}^i$ the flexibility set of the $i^{th}$ EV in the population.
The aggregator is tasked with finding a charging profile for each EV, such that the \textit{aggregate charging profile},
$x := \sum_i^K u^i$, minimises some cost, $f: \mathbb{R}^n \rightarrow \mathbb{R}$, over the time horizon. This is formulated:
\begin{equation}\label{eq:un_aggregated_opt}
    \begin{alignedat}{2}
    %&\underset{\hat{A}^{a,d}}{\text{minimise}} \;\; 
    &\underset{u^i \; \forall i \in \mathbb{A}}{\text{minimise}} \;\; 
    &&f(x)\\
    &\text{subject to } \;\;            &&x = \sum_{i \in \mathbb{A}}u^i\\
    &                                   &&u^i \in \mathcal{P}^i.
    \end{alignedat}
\end{equation}

For large populations this problem can become intractable, therefore  reformulations of \eqref{eq:un_aggregated_opt} are required.
We define the \textit{aggregate flexibility set}, $\mathcal{P}^{agg}$, to be the set of feasible aggregate charging profiles that can be taken by the population as a whole:
\begin{equation*}
    \mathcal{P}^{agg} := \left\{ x = \sum^K_i u^i \middle|\; u^i \in \mathcal{P}^i, \;\; \forall \; i\right\}.
\end{equation*}
This set can be equivalently expressed as the Minkowski sum of the individual $\mathcal{P}^i$, namely:  
\begin{equation}\label{eq:minkowski_sum}
    \mathcal{P}^{agg} = \mathcal{P}^1 \oplus,..., \oplus \mathcal{P}^K = \biguplus_i^K \mathcal{P}^i. 
\end{equation}
In general computing Minkowski sums is NP-hard \cite{Tiwary2008OnPolytopes}. 
However, 
as we will show in the next section, symmetry in the polytopes $\mathcal{P}^i$ can be exploited to make the computation of such Minkowski sum be tractable. 

As anticipated earlier, within the constraints set by $\mathcal{P}^{agg}$, 
an aggregator is tasked with optimising some convex cost $f$, 
in order to find an optimal and feasible aggregate charging profile, call it $x^*$. 
Given such a global charging profile $x^*$ across the population, 
the aggregator should then find a set $\left\{  u^{1},..., u^{K} \right\}$ that disaggregates $x^*$ amongst the EVs in the population, 
so that $x^* = \sum_i^K u^{i}$, and $\forall \; i, u^{i} \in \mathcal{P}^i$. 
% Over the finite time horizon $[0, \delta n]$, the state of charge of an EV is 

% where $u_t$ is the power supplied to the EV in the time interval $[(t-1)\delta, t\delta)$ of length $\delta$, indexed by $t \in \mathbb{T} := \{1,...,n\}$. For a population of $K$ EVs indexed by indexed the energy requirements of the $k^{th}$ EV are parameterised by the tuple $(E^i, a^i, d^i, m^i)$, where 

% To do this the aggregator must first characterise the set of feasible aggregate charging profiles of the population.
 
% index by 
% This can be done by considering the feasible profiles for a single EV in the population. 

\section{AGGREGATION}\label{sec:agg}
To enable an efficient computation of the Minkowski sum it is convenient to first compute the Minkowski sum of EVs with homogeneous arrival and departure times, then evaluate the Minkowski sum on the resulting polytopes. 
We denote the subset of EVs with homogeneous arrival and departure times as 
\begin{equation*}
    \mathbb{A}^{a,d} := \left\{ i \in \{1,...,K\} \; \middle \vert \; a^i = a, \;\;d^i = d
    \right\}. 
\end{equation*}
For all $i \in \mathbb{A}^{a,d}$, equation \eqref{constraint:def_loada} constrains $\mathcal{P}^i$ to lie in the affine subspace $\mathbb{R}^p \subseteq \mathbb{R}^n$ . 
% Elements of $\mathcal{P}^i$ have the structure
% \begin{equation*}
%      u = ( \underbrace{0, ..., 0}_{a-1},u_1,..., u_p , \underbrace{0, ..., 0}_{n-d}).
% \end{equation*}
Let $\check{\mathcal{P}}^i \subset\mathbb{R}^p $ denote $\mathcal{P}^i$ in the subspace in which it is full dimensional.
% and so for $\check{u} \in \check{\mathcal{P}}^i$, $ \check{u} = (u_1, ..., u_p)$.
Similarly we can start from a $\mathcal{Q}^{a,d} \subset \mathbb{R}^p$, and denote by $\hat{\mathcal{Q}}^{a,d}$ its embedding in $\mathbb{R}^n$. 

This allows us to characterise the Minkowski sum in \eqref{eq:minkowski_sum} more conveniently: 
namely,  
we first compute the Minkowski sum of flexibility sets that lie in the same subspace (cf. Sec. \ref{subsec:homog})
\begin{equation}    
    \mathcal{Q}^{a,d} := \biguplus_{i \in \mathbb{A}^{a,d}} \check{\mathcal{P}}^i,
\end{equation}
then sum the resulting polytopes (cf. Sec. \ref{subsec:heterog}) to obtain 
\begin{equation}\label{eq:hetero_q_sum}
\mathcal{P}^{agg} = \biguplus_{a,d} \hat{\mathcal{Q}}^{a,d}.
\end{equation}

\subsection{Homogeneous arrival and departure times}
\label{subsec:homog}
Consider an EV with charging requirements $(E,a,d,m)$ and flexibility set $\mathcal{P}$. Let $ q  := \left\lfloor  E/m \right\rfloor$ be the quotient and $r  :=  E - qm $ the modulus of $E$ with respect to $m$.
\vspace{.2mm}
\begin{defn} A \textit{permutahedron} $\Pi(x)$ is defined as the convex hull of all permutations of the vector $x \in \mathbb{R}^p$:
\begin{equation*}
    \Pi(x) := \textrm{ConvexHull}(x_\pi | \pi \in S_p), 
\end{equation*}
where $S_p$ is the symmetric group and $x_\pi$ denotes the permutation of the elements of $x$ specified by $\pi$.
Whenever we define $\Pi(x)$ we specify $x$ to be in the unique permutation in which it is \textit{monotone}, i.e. $x_1 \geq ... \geq x_p$. 
\end{defn}
\vspace{.2mm}
\begin{thm}\label{theorem:majorization_permutahedron}
$\check{\mathcal{P}}$ is the permutahedron $\Pi(v)$ where
\begin{equation}\label{eq:vertex_major}
    v  = ( \underbrace{m, ..., m}_{q}, r , \underbrace{0, ..., 0}_{p  - q  - 1}).
\end{equation}
\end{thm}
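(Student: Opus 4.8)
The plan is to recognise $\check{\mathcal{P}}$ as the intersection of the box $[0,m]^p$ with the hyperplane $\sum_t u_t = E$, and to identify this polytope with the set of vectors \emph{majorized} by $v$, whereupon the classical Hardy--Littlewood--P\'olya/Rado theorem delivers the permutahedral description. Restricting \eqref{eq:resource_polytope} to the active coordinates gives $\check{\mathcal{P}} = \{ u \in \mathbb{R}^p : 0 \le u_t \le m \; \forall t, \; \sum_t u_t = E \}$. Writing $u_{[1]} \ge \dots \ge u_{[p]}$ for the entries of $u$ in decreasing order and $\sigma_k(u) := \sum_{t=1}^k u_{[t]}$ for the $k$-th partial sum, recall that $u$ is majorized by $v$, written $u \preceq v$, when $\sigma_k(u) \le \sigma_k(v)$ for every $k$, with equality at $k = p$.

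The first step is to compute the partial sums of $v$ in closed form. As $v$ consists of $q$ entries equal to $m$, one entry equal to $r \in [0,m)$, and the rest equal to zero, its sorted partial sums satisfy $\sigma_k(v) = \min(km, E)$: they grow linearly as $km$ while $k \le q$ and saturate at $qm + r = E$ once $k \ge q+1$. This identity is the crux that links the two descriptions.

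Next I would establish $\check{\mathcal{P}} = \{ u : u \preceq v \}$ by double inclusion. If $u \in \check{\mathcal{P}}$, then its $k$ largest entries sum to at most $km$ (each entry is at most $m$) and to at most $E$ (the entries are nonnegative with total $E$), so $\sigma_k(u) \le \min(km, E) = \sigma_k(v)$; together with $\sigma_p(u) = E$ this is exactly $u \preceq v$. Conversely, if $u \preceq v$, the total is fixed to $E$, the inequality $u_{[1]} = \sigma_1(u) \le \min(m,E) \le m$ forces $u_t \le m$ for all $t$, and subtracting the $(p-1)$-th partial-sum inequality from the equality at $k=p$ gives $u_{[p]} = E - \sigma_{p-1}(u) \ge E - \sigma_{p-1}(v) = E - \min((p-1)m, E) \ge 0$, so every entry is nonnegative; hence $u \in \check{\mathcal{P}}$.

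With this equality in hand, the Rado theorem identifies $\{ u : u \preceq v \}$ with the convex hull of the permutations of $v$, namely $\Pi(v)$, completing the proof. I expect the reverse inclusion to be the main obstacle, specifically deriving the lower bound $u_t \ge 0$ from majorization, since this is where the saturation value of $\sigma_k(v)$ must be used carefully; in particular the boundary case $p = q+1$, where $v$ has no zero entries and $\sigma_{p-1}(v) = qm$ rather than $E$, should be checked separately. As a self-contained alternative that avoids citing Rado, one could instead enumerate the vertices of $\check{\mathcal{P}}$ directly: a vertex has at most one coordinate strictly inside $(0,m)$, and the sum constraint then forces exactly $q$ coordinates to equal $m$, one to equal $r$, and the remainder to vanish, so the vertices are precisely the permutations of $v$ and $\check{\mathcal{P}}$ equals their convex hull.
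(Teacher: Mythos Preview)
Your argument is correct. The computation $\sigma_k(v)=\min(km,E)$ is right, the double inclusion establishing $\check{\mathcal P}=\{u:u\preceq v\}$ is complete (including the boundary case $p=q+1$, where $\sigma_{p-1}(v)=qm$ and one still gets $u_{[p]}\ge r\ge 0$), and invoking the Hardy--Littlewood--P\'olya/Rado characterisation of the majorization order then yields $\check{\mathcal P}=\Pi(v)$.

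This is, however, a genuinely different route from the paper's. The paper argues purely geometrically: $\check{\mathcal P}$ is the slice of the cube $[0,m]^p$ by the affine hyperplane $\sum_t u_t=E$, so its vertices lie on the cube's edges; any such vertex has all coordinates in $\{0,m\}$ except at most one, and the sum constraint then forces the pattern $(m,\dots,m,r,0,\dots,0)$ up to permutation. That is precisely the ``self-contained alternative'' you sketch in your final paragraph. Your main line via majorization is in fact the approach the paper explicitly attributes to~\cite{Hao2014CharacterizingLoads} in the remark immediately following the theorem. The trade-off is that the majorization route is cleaner algebraically and immediately gives the $\mathcal H$-description $\sigma_k(u)\le\min(km,E)$, while the paper's geometric argument is elementary and avoids citing Rado's theorem but gives only the $\mathcal V$-description directly.
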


\begin{proof}
The constraint given by \eqref{constraint:def_loadb} restricts $\check{\mathcal{P}}$ to lie in a $p$ dimensional cube.
The edge length of this cube is $m$.
The constraint \eqref{eq:energy_constraints} defines a hyperplane.
The vertices of $\check{\mathcal{P}}$ lie at the intersection of the hyperplane described by \eqref{eq:energy_constraints} and the edges of the $p$-cube. It is clear to see that $v$ is a vertex of $\check{\mathcal{P}}$. It lies on the edge of the cube between vertices $(m, ..., m, 0 ,0, ..., 0)$ and $(m, ..., m, m , 0, ..., 0)$, and, using the definitions of $q$ and $r$, is in the hyperplane, $\sum^n_t v_t = E$. All other vertices are the permutations of the elements of $v$, this is shown in Fig. \ref{fig:cubePlane} for $p=3$ and $q=1$. By definition $r < m$, and so $v$ is the only monotone vertex of $\check{\mathcal{P}}$. We can then write $\check{\mathcal{P}}$ as $\check{\mathcal{P}} = \Pi(v)$.
\end{proof}

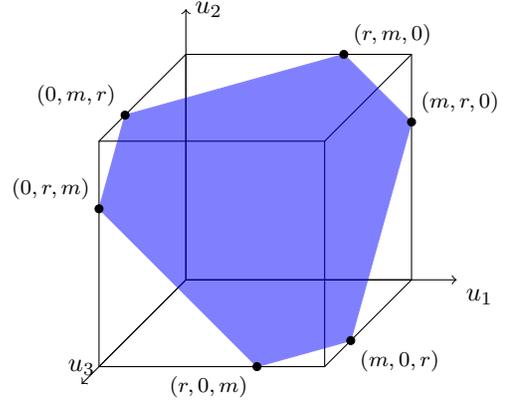
\begin{figure}[t]
  \centering
  \begin{tikzpicture}[scale=3]
% coordinates of the vertices of the cube
\coordinate (A) at (0,0,0);
\coordinate (B) at (1,0,0);
\coordinate (C) at (1,1,0);
\coordinate (D) at (0,1,0);
\coordinate (E) at (0,0,1);
\coordinate (F) at (1,0,1);
\coordinate (G) at (1,1,1);
\coordinate (H) at (0,1,1);

\draw[->] (0,0,0) -- (1.2,0,0) node[below right]{$u_1$};
\draw[->] (0,0,0) -- (0,1.2,0) node[right]{$u_2$};
\draw[->] (0,0,0) -- (0,0,1.2) node[above]{$u_3$};

% draw the cube edges
\draw (A) -- (B) -- (C) -- (D) -- cycle;
\draw (E) -- (F) -- (G) -- (H) -- cycle;
\draw (A) -- (E);
\draw (B) -- (F);
\draw (C) -- (G);
\draw (D) -- (H);

% draw the plane
\filldraw[opacity=0.5, blue] (1,0.7,0) -- (1,0,0.7) -- (0.7,0,1) -- (0,0.7,1)--  (0,1,0.7)-- (0.7,1,0)  -- cycle;

\coordinate (I) at (1,0.7,0);
\coordinate (J) at (0,0.7,1);
\coordinate (K) at (0.7,1,0);
\coordinate (L) at (0.7,0,1);
\coordinate (M) at (1,0,0.7);
\coordinate (N) at (0,1,0.7);
% draw the points

\filldraw (I) circle (0.5pt) node[above right] {\footnotesize $(m,r,0)$};
\filldraw (J) circle (0.5pt) node[above left] {\footnotesize $(0,r,m)$};
\filldraw (K) circle (0.5pt) node[above right] {\footnotesize $(r,m,0)$};
\filldraw (L) circle (0.5pt) node[below left] {\footnotesize $(r,0,m)$};
\filldraw (M) circle (0.5pt) node[below right] {\footnotesize $(m,0,r)$};
\filldraw (N) circle (0.5pt) node[above left] {\footnotesize $(0,m,r)$};

\end{tikzpicture}
  \caption{Intersection of the plane $\sum_t^3 u_t = m + r$, and a cube with edge length $m$. Vertices are found by permuting the elements of  $v=(m,r,0)$.}
  \label{fig:cubePlane}
\end{figure}

\cite[Theorem 3]{Dahl2010Majorization01-matrices} states that, 
for permutahedra $\Pi(x)$ and $\Pi(y)$, 
their Minkowski sum is
\begin{equation}\label{eq:perm_m_sum}
    \Pi(x) \oplus \Pi(y) = \Pi(x+y). 
\end{equation}
Reintroducing the superscript $i$ indexing over EVs in the population, we define
\begin{equation}\label{eq:major_vertex_sum}
    \nu^{a,d} := \sum_{i \in \mathbb{A}^{a,d}} v^i.
\end{equation}
Theorem \ref{theorem:majorization_permutahedron} can be applied to \eqref{eq:perm_m_sum} in order to compute $\mathcal{Q}^{a,d}$: 
\begin{equation*}
    \mathcal{Q}^{a,d} = \biguplus_{i \in \mathbb{A}^{a,d}} \Pi(v^i) = \Pi(\nu^{a,d}).
\end{equation*} 
% \begin{equation*}
%     \mu^{a,d}_i = \sum_{q=1}^{i}\sum_{i \in \\ \mathbb{A}^{a,d}_{q-1}} m^i + \sum_{k \in \\ \mathbb{A}^{a,d}_{i-1}} r^i
% \end{equation*}
Note that a similar result is presented in \cite{Hao2014CharacterizingLoads} in the context of majorisation theory, 
however, we highlight that \cite{Hao2014CharacterizingLoads} leverages this only for the case where $m$ is homogeneous.

\subsection{Heterogeneous arrival and departure times} 
\label{subsec:heterog}
We now want to sum aggregate charging profiles for EVs with heterogeneous arrival and departure times, as described in \eqref{eq:hetero_q_sum}. 
As we have shown the previous section, this amounts to computing
\begin{equation}\label{eq:heterogeneous_sum}
    \mathcal{P}^{agg} = \biguplus_{a,d}\hat{\Pi}(\nu^{a,d}),
\end{equation}
namely the Minkowski sum of permutahedra that lie in different subspaces, all of which are embedded in the same ambient space $\mathbb R^n$. 

For $\pi \in S_n$, we write the permutation $\pi$ in its one-line notation e.g. an element of $S_6$ can be written as $\pi = (3,1,4,5,2,6)$. Furthermore, we define $\pi^{a,d} \in S_{d-a}$ as the order ranking of the elements of $\pi$ between the $a^{th}$ and $(d-1)^{th}$ elements, e.g. for $\pi = (3,1,4,5,2,6)$, we have 
\begin{alignat*}{2}
    % &\pi        &&= (3,1,4,5,2,6)\\
    &\pi^{2,6}  &&=(1,3,4,2), \\
    &\pi^{2,5}  &&=(1,2,3).
\end{alignat*}

Using this notation, we define 
\begin{equation}\label{eq:nu_hat_perm_notation}
    \hat{\nu}^{a,d}_\pi  := ( \underbrace{0, ..., 0}_{a-1}, \nu_{\pi^{a,d}}^{a,d} , \underbrace{0, ..., 0}_{n-d}).
\end{equation}

\begin{thm}
The vertices of $\mathcal{P}^{agg} = \biguplus_{a,d}\hat{\Pi}(\nu^{a,d})$ are given by the set $ \{\mu_\pi | \pi \in S_n\}$ where we define 
\begin{equation} 
    \mu_\pi := \sum_{a,d} \hat{\nu}^{a,d}_\pi.
\end{equation}
Hence, we can characterise the set of aggregate charging profiles as 
\begin{equation}\label{eq:complete_aggregation}
\mathcal{P}^{agg} =  \textrm{ConvexHull}\left( \mu_\pi \middle| \pi \in S_n \right).
\end{equation}
\end{thm}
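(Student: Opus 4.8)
The plan is to characterise the vertices of $\mathcal{P}^{agg}$ through their support functionals, exploiting the fact that for any polytopes $A,B$ and any direction $c$, the face of $A\oplus B$ maximising $c$ is the Minkowski sum of the corresponding faces of $A$ and $B$; in particular a point is a vertex of $A\oplus B$ if and only if it is the \emph{unique} maximiser of some linear functional $c^\top x$, in which case it equals the sum of the unique maximisers of $c$ over each summand. I would therefore compute, for a generic direction $c \in \mathbb{R}^n$ (one with pairwise distinct entries), the unique maximiser of $c^\top x$ over each $\hat{\Pi}(\nu^{a,d})$ and sum the results.

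For a single permutahedron $\Pi(\nu^{a,d})$ with monotone generator $\nu^{a,d}$ (monotone because each $v^i$ is, and monotonicity is preserved under the sum in \eqref{eq:major_vertex_sum}), the rearrangement inequality states that $c^\top x$ is maximised over all permutations of $\nu^{a,d}$ by aligning the largest component of $\nu^{a,d}$ with the largest component of $c$, the second largest with the second largest, and so on; when the entries of $c$ are distinct this maximiser is unique. For the embedded set $\hat{\Pi}(\nu^{a,d})$, which is supported on coordinates $a,\dots,d-1$ and zero elsewhere, the value $c^\top x$ depends only on the restriction of $c$ to those coordinates. Hence the maximiser is obtained by sorting $\nu^{a,d}$ against the relative order of $c_a,\dots,c_{d-1}$. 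Writing $\pi \in S_n$ for the global permutation that sorts $c$, this relative order is exactly $\pi^{a,d}$, so the unique maximiser over $\hat{\Pi}(\nu^{a,d})$ is precisely $\hat{\nu}^{a,d}_\pi$ as defined in \eqref{eq:nu_hat_perm_notation}.

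Summing over all active windows $(a,d)$ then gives that the unique maximiser of $c$ over $\mathcal{P}^{agg}$ is $\sum_{a,d}\hat{\nu}^{a,d}_\pi = \mu_\pi$. This yields both inclusions at once. Any vertex of $\mathcal{P}^{agg}$ is the unique maximiser of some functional, which after an arbitrarily small perturbation can be taken generic without changing the maximiser, so every vertex equals some $\mu_\pi$; conversely, for each $\pi \in S_n$ one can choose a generic $c$ whose sorting permutation is $\pi$, making $\mu_\pi$ a unique maximiser and hence a vertex. The vertex set is therefore $\{\mu_\pi \mid \pi \in S_n\}$, and since a polytope is the convex hull of its vertices, \eqref{eq:complete_aggregation} follows.

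The main obstacle I anticipate is the combinatorial bookkeeping in the second step: one must verify carefully that the relative ranking of $c$ on the window $a,\dots,d-1$ induced by a global sort of $c$ coincides with the reduced permutation $\pi^{a,d}$, and that applying it to the monotone $\nu^{a,d}$ reproduces the embedded vertex $\hat{\nu}^{a,d}_\pi$ under the index conventions of \eqref{eq:nu_hat_perm_notation}. A secondary subtlety is the treatment of non-generic directions: distinct permutations $\pi$ may yield the same point $\mu_\pi$, so the claim concerns the vertex set rather than a bijection with $S_n$, and one should confirm that restricting attention to generic $c$ misses no genuine vertex.
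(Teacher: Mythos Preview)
Your approach is essentially the same as the paper's: both use the Minkowski additivity of support functions together with the rearrangement inequality to identify, for each generic direction $c$ with pairwise distinct entries, the unique maximiser over every embedded permutahedron and hence over their Minkowski sum. Your argument is in fact more complete, since you explicitly establish the converse inclusion (every vertex of $\mathcal{P}^{agg}$ is some $\mu_\pi$, via perturbation to a generic direction), whereas the paper's proof only exhibits each $\mu_\pi$ as a supporting point.
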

The proof can be found in part \ref{appendix:heterogeneous_minkowski_sum_proof} of the Appendix. 

\begin{rem}
This is the only $\mathcal{V}$-representation of the exact aggregate flexibility set for a heterogeneous  population of EVs that the authors are aware of. The polytope in \eqref{eq:complete_aggregation} is an example of \textit{generalised permutahedron}. %as studied by Postnikov in 
Recalling that $n$ denotes the finite number of discrete steps in the considered time horizon, this polytope in general has $n!$ vertices, and it can additionally be shown  that in general there are $2^n - 2$ facets \cite{Postnikov2009PermutohedraBeyond}. We also remark  that the number of facets and vertices is \textit{independent} of $K$, the size of the population, thus the characterisation is quite appropriate for arbitrarily large populations of EVs. 
\end{rem}

\section{OPTIMISATION}\label{sec:opt}
With a general characterisation of the aggregate flexibility set $\mathcal{P}^{agg}$ over the population of EVs, 
we are interested in computing an optimal aggregate charging profile, 
by solving the following constrained optimisation problem: 
\begin{equation}\label{eq:optimisation}
\begin{alignedat}{2}
&\underset{x}{\text{minimise}} \;\; &&f(x)\\
&\text{subject to } \;\; &&x \in \mathcal{P}^{agg}.
\end{alignedat}
\end{equation}
The objective $f$ can be quite general and is left momentarily undefined: we shall instead discuss specific instantiations in Sec. \ref{subsec:objectives}.  

\subsection{Tractable problem reformulation}\label{section:reformulation}
The characterisation of $\mathcal{P}^{agg}$ in \eqref{eq:complete_aggregation} was obtained to provide the aggregate flexibility set for a heterogeneous population of EVs.
As discussed, this formulation describes a polytope with in general exponentially many facets, as a function of the number of discrete steps $n$ over the time horizon.  
For large $n$, the optimisation over $\mathcal{P}^{agg}$ thus entails many constraints and is potentially  intractable. If instead we revert to the characterisation of $\mathcal{P}^{agg}$ given in \eqref{eq:heterogeneous_sum}, we can equivalently formulate the problem in a tractable way. 

\begin{defn}\label{def:birkhoff_polytope}
    % \comment{in this statement, it is unclear where the permutation matrix A is derived from - it pops up out of nowhere.}
    % \karanComment{I have changed the definition but let me know if it is still unlcear}
    The \textit{Birkhoff polytope}, denoted $B^p$, is the set of all $p \times p$ doubly stochastic matrices \cite{Ziegler2012LecturesPolytopes}, a complete description of which is given by  
    \begin{equation}\label{eq:birkhoff}
        B^p = \left\{ A \in \mathbb{R}^{p \times p} \; \middle\vert \; 
        \begin{array}{@{}cl}
            A_{ij} \geq 0     &\forall i, j \\
            \sum_i A_{ij} = 1 &\forall j\\
            \sum_j A_{ij} = 1 &\forall i\\
        \end{array} 
        \right\}.
    \end{equation}
\end{defn}
Extending the notation in the previous section for $B^p$ we define $\hat{B}^{a,d} \subset \mathbb{R}^{n\times n}$ as
\begin{equation*}
    \hat{B}^{a,d} := \left\{\;[0_{p,a-1}, A^T, 0_{p,n-d}]^T\;\middle|\; A \in  B^p \;\right\}, 
\end{equation*}
% \begin{equation*}
%     \hat{B}^{a,d} := \left\{\;
%     \begin{bmatrix} 
%          0_{a-1,p}\\
%          A\\
%          0_{n-d,p}\\  
%     \end{bmatrix}
%     \;\middle|\; A \in  B_p \;\right\}
% \end{equation*}
where $0_{m,n}$ denotes the $m\times n$ zero matrix. It can be shown that $\hat{\Pi}(\nu^{a,d})$ in  \eqref{eq:heterogeneous_sum} 
is the linear image of $\hat{B}^{a,d}$ under the mapping 
\begin{equation}\label{eq:birkhoff_perm_mapping}
    \hat{A}^{a,d} \mapsto \hat{A}^{a,d}\hat{\nu}^{a,d}, 
\end{equation}
whence we obtain an alternative description of $\mathcal{P}^{agg}$, as
\begin{equation*}
    \mathcal{P}^{agg} = \left\{ x = \sum_{a,d}\hat{A}^{a,d}\hat{\nu}^{a,d} \middle|
    \;\hat{A}^{a,d} \in \hat{B}^{a,d}, \;\; \forall\; a,d\right\},
\end{equation*}
which can be used to reformulate \eqref{eq:optimisation} as follows: 
\begin{equation}\label{eq:birkhoff_optimisation}
    \begin{alignedat}{2}
    %&\underset{\hat{A}^{a,d}}{\text{minimise}} \;\; 
    &\underset{\hat{A}^{a,d} \; \forall\; a,d}{\text{minimise}} \;\; 
    &&f(x)\\
    &\text{subject to } \;\;            &&x = \sum_{a,d}\hat{A}^{a,d}\hat{\nu}^{a,d}\\
    &                               limit    &&\hat{A}^{a,d} \in \hat{B}^{a,d}, \;\; \forall\; a,d.
    \end{alignedat}
\end{equation}
The solution of this problem yields the optimal set $(\hat{A}^{a,d})^*$, for all $a,d$. 
From this the optimal aggregate charging profile $x^*$ can be recovered via the mapping $x^* = \sum_{a,d}(\hat{A}^{a,d})^*\nu^{a,d}$.
 
 \begin{rem}
     The solution to \eqref{eq:birkhoff_optimisation} is not unique.
      This is apparent from considering $\mathcal{P}^{agg}$ as the Minkowski sum from \eqref{eq:heterogeneous_sum}.
      Given a point, $z$, in the interior of $\mathcal{P}^{agg}$, one can consider the decomposition $z = \sum_{a,d} y^{a,d}$ where $y^{a,d} = \hat{A}^{a,d}\hat{\nu}^{a,d} \in \hat{\Pi}^{a,d}$. There is a set of such decompositions, and so we can conclude that the solution is not unique. Note that there is a unique decomposition iff $z$ is a vertex of $\mathcal{P}^{agg}$ \cite[Proposition 2.1]{Fukuda2004FromPolytopes}, from this we can deduce that the solution is unique iff $x^*$ is a vertex of $\mathcal{P}^{agg}$, this occurs when all $(\hat{A}^{a,d})^*$ are permutation matrices.

 \end{rem}

\subsection{Practical considerations on computational complexity}
The main motivation for this work is to compute  optimal charging profiles for EVs in a large and heterogeneous population.
If we were to optimise over the population without aggregating it first, there would be $kn$ decision variables, namely one for each EV at each time step therefore the complexity of this problem would be $O(nk)$. 
% and $k(n+1)$ constraints.
The optimisation problem formulated in \eqref{eq:birkhoff_optimisation} optimises over the set of matrices $\hat{A}^{a,d}$ for each pair of $a$ and $d$, with $a<d$. 
Each of the $\hat{A}^{a,d}$ introduce $(d-a)^2$ decision variables. 
Summing over all pairs of $a$ and $d$ we see that this problem has $n^2(n^2-1)/12$ decision variables, resulting in an overall complexity of $O(n^4)$, we emphasise that this is independent of the population size $k$. 

Aggregators operating on a national scale will generally deal with  populations of EVs on the order of a hundred thousand. Currently, the many energy markets are operated in half hourly settlement periods, when optimising over a 24hr time period this corresponds to $n = 48$. In this scenario aggregating charging profiles and optimising over this makes an otherwise intractable problem tractable. We do note however, that for smaller populations and over longer time periods, due to the quartic dependence of the length of the time horizon on complexity, this method of aggregation may actually increase complexity.

\subsection{Objective functions}\label{subsec:objectives}
As formulated in \eqref{eq:birkhoff_optimisation} any convex objective function, $f$, may be chosen. In scenarios where the price of energy, $p \in \mathbb{R}^n$, is fixed in each time step a linear objective, $f(x) = p^Tx$, can be used. 
Solutions to linear programs are vertices of the feasible the set, in this case $\mathcal{P}^{agg}$. These solutions may be obtained equivalently by solving linear programs for each EV in parallel and summing the individual optimal charging profiles \cite{Papadaskalopoulos2013DecentralizedMechanism}. This makes the process of computing $\mathcal{P}^{agg}$ redundant. However, we submit that formulations of the problem may include other, nonlinear terms in the objective, in which case optimal solutions will lie in the interior of $\mathcal{P}^{agg}$, and so characterising and optimising over $\mathcal{P}^{agg}$ is necessary. Examples of this occur when the price is a dependent on demand \cite{Gharesifard2016Price-basedResources}, when total demand from EV charging is non-negligible a fixed price is no longer a valid assumption. If the price of energy is affine in demand a quadratic cost, $f(x) = x^TPx + x^TPd$, may be used, where $P = diag(p)$ and $d \in \mathbb{R}^n$ represents the external demand. Solutions to the quadratic cost are not necessarily vertices of the feasible set, and so the strategy of solving in parallel is not applicable here.

Aside from minimising charging costs \eqref{eq:birkhoff_optimisation} can be formulated so that the aggregate charging profile tracks a generation signal. In this scenario the energy system operator specifies a generation signal $g \in \mathbb{R}^n$ that should be followed, the aggregator must then determine whethlimiter the generation signal is feasible, if it is not the aggregator may propose another generation signal that is feasible. Consequently the aggregator may specify the objective to be one that minimises some $p$-norm between the aggregate charging profile $x$ and $g$: $f(x) = ||x - g||_p $. If the optimal cost vanishes, the aggregator can deduce that $g$ is feasible. For non-zero optimal costs the aggregator can propose the optimal solution as a new generation signal that can be tracked, in which case the preferences of the system operator will guide the choice of norm.

\section{DISAGGREGATION}\label{sec:disagg}
Solving the problem in \eqref{eq:birkhoff_optimisation} will yield an optimal aggregate charging profile $x^*$.
However, this quantity gives no explicit indication of the individual charging profiles for each EV in the population. 
Therefore the aggregator ought to disaggregate the aggregate charging profile $x^*$ across the population, thus assigning a charging profile to each EV. We will refer to this assignment as a \textit{scheduler} and define it as 
\begin{equation*}
    \sigma(x^*) := \left[  u^{1^*},..., u^{K^*} \right], \\
\end{equation*}
where $u^{i^*} \in \mathbb{R}^n$ is the optimal charging profile assigned to EV $i$. To ensure the optimal aggregate charging profile is tracked, the scheduler must respect the constraint
 \begin{equation}\label{eq:disaggregation_criteria_tracks}
     x^* = \sum_i^K u^{i^*},
\end{equation}
moreover assignments must be feasible for each EV, and so
    \begin{equation}\label{eq:disaggregation_criteria_feasible}
       u^{i^*} \in \mathcal{P}^i, \;\;\forall \; i.
     \end{equation}
% The charging profiles are then communicated to individual EVs in the population
% The scheduler is unique if $x^*$ lies is a vertex of $\mathcal{P}^{}(agg)$.

\begin{thm}
Let $(\hat{A}^{a,d})^*$ be the set of matrices that are the solution to \eqref{eq:birkhoff_optimisation}, and $v^i$ the only monotone vertex of $\check{\mathcal{P}}^i$, as defined in \eqref{eq:vertex_major}. The mapping
    \begin{equation*}
    u^{i^*} = (\hat{A}^{a,d})^* \hat{v}^i \quad \forall i \in \mathbb{A}^{a,d},
\end{equation*}
satisfies the criteria in \eqref{eq:disaggregation_criteria_tracks} and \eqref{eq:disaggregation_criteria_feasible} and is therefore a suitable scheduler. 
\end{thm}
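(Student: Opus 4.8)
The plan is to verify the two scheduler criteria separately, treating the tracking condition \eqref{eq:disaggregation_criteria_tracks} first since it follows almost immediately by linearity. I would substitute the proposed scheduler into the total $\sum_i^K u^{i^*}$ and partition the population into its homogeneous subgroups, writing $\sum_i^K u^{i^*} = \sum_{a,d}\sum_{i\in\mathbb{A}^{a,d}}(\hat{A}^{a,d})^*\hat{v}^i$. The crucial observation is that a single matrix $(\hat{A}^{a,d})^*$ is shared by every EV in the group $\mathbb{A}^{a,d}$, so it factors out of the inner sum; by the definition $\nu^{a,d}=\sum_{i\in\mathbb{A}^{a,d}}v^i$ in \eqref{eq:major_vertex_sum} (equivalently $\hat{\nu}^{a,d}=\sum_{i\in\mathbb{A}^{a,d}}\hat{v}^i$ after embedding), the inner sum collapses to $\hat{\nu}^{a,d}$. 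This leaves $\sum_{a,d}(\hat{A}^{a,d})^*\hat{\nu}^{a,d}$, which is exactly the reconstruction formula for $x^*$ obtained from the solution of \eqref{eq:birkhoff_optimisation}, establishing \eqref{eq:disaggregation_criteria_tracks}.

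For feasibility \eqref{eq:disaggregation_criteria_feasible}, the key fact I would invoke is that any doubly stochastic matrix maps a vector into the permutahedron generated by that vector. Concretely, for $A\in B^p$ and any $v\in\mathbb{R}^p$, the product $Av$ is majorised by $v$, so $Av\in\Pi(v)$; equivalently, by the Birkhoff--von Neumann theorem $A$ is a convex combination of permutation matrices, whence $Av$ is a convex combination of permutations of $v$, i.e.\ a point of $\Pi(v)=\textrm{ConvexHull}(v_\pi\mid\pi\in S_p)$. This is precisely the individual-vertex analogue of the linear-image identity $\hat{\Pi}(\nu^{a,d})=\{\hat{A}^{a,d}\hat{\nu}^{a,d}\mid\hat{A}^{a,d}\in\hat{B}^{a,d}\}$ already used in Section \ref{section:reformulation}, now applied to $v^i$ in place of $\nu^{a,d}$. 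Applying it to the $p\times p$ core $(A^{a,d})^*$ of the optimal matrix yields $(A^{a,d})^*v^i\in\Pi(v^i)=\check{\mathcal{P}}^i$ by Theorem \ref{theorem:majorization_permutahedron}.

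It then remains to transport this back from the subspace $\mathbb{R}^p$ to the ambient space $\mathbb{R}^n$. Here I would note that the zero-padding in the definition of $\hat{B}^{a,d}$ is constructed precisely so that applying $(\hat{A}^{a,d})^*$ to the embedded vertex $\hat{v}^i$ reproduces the embedding of $(A^{a,d})^*v^i$; that is, $u^{i^*}=(\hat{A}^{a,d})^*\hat{v}^i$ is the image in $\mathbb{R}^n$ of $(A^{a,d})^*v^i\in\check{\mathcal{P}}^i$, and therefore $u^{i^*}\in\hat{\Pi}(v^i)=\mathcal{P}^i$. Combining the two parts gives the claim.

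I expect the main obstacle to be the feasibility step, and specifically making rigorous that one common matrix $(\hat{A}^{a,d})^*$ keeps every distinct individual vertex $v^i$ inside its own permutahedron simultaneously. The resolution is that the inclusion $Av\in\Pi(v)$ holds for an \emph{arbitrary} $v$, so the same doubly stochastic matrix respects each $\Pi(v^i)$ independently regardless of how the $v^i$ differ across the group; the only bookkeeping care needed is to confirm that the embedding commutes with matrix multiplication, which follows directly from the block structure of $\hat{B}^{a,d}$.
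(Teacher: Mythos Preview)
Your proposal is correct and follows essentially the same approach as the paper: the tracking condition is verified by partitioning into homogeneous subgroups, factoring out the shared matrix, and collapsing the inner sum via \eqref{eq:major_vertex_sum}; feasibility is established by the linear-image identity (doubly stochastic matrices send $v$ into $\Pi(v)$) combined with Theorem~\ref{theorem:majorization_permutahedron}. Your treatment is in fact slightly more explicit than the paper's, which simply cites Section~\ref{section:reformulation} for the inclusion $\hat{A}^{a,d}\hat{v}^i\in\hat{\Pi}(v^i)$ rather than spelling out the Birkhoff--von Neumann argument and the embedding bookkeeping.
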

The proof can be found in part \ref{appendix:disaggregation_proof} of the Appendix. This scheduler is dependent on $(\hat{A}^{a,d})^*$, which, as discussed in section \ref{section:reformulation}, is nonunique. Therefore, when $x^*$ is not a vertex of $\mathcal{P}^{agg}$, the disaggregation is also nonunique.

\section{CASE STUDIES}\label{sec:case_studies}
In this section we demonstrate the utility of the aggregation method presented in this paper. In Sec. \ref{subsec:benchmark} we use an energy arbitrage problem to benchmark time complexity. In Sec. \ref{subsec:tracking} we show how these methods can be used to track generation signals that are infeasible for other aggregation methods to track.

\subsection{Energy arbitrage}\label{subsec:benchmark} 
One of the functions of an aggregator is to minimise the cost of charging for EVs under their control.  We consider a scenario in limitwhich an aggregator bids into the day ahead market, where energy prices are fixed in each time step. The aggregator is tasked with minimising the aggregate charging cost of the population. Thus the problem is formulated as in \eqref{eq:birkhoff_optimisation} with the objective: $f(x) = p^Tx$. Note this choice of objective is used to reflect the fixed price energy arbitrage problem, however, these results hold for all convex objectives.
By virtue of these methods being an exact approximation, optimality is guaranteed, in contrast with the inner approximation methods in the literature. The only other method to guarantee optimality is to optimise over individual charging profiles i.e. solving the problem formulated in \eqref{eq:un_aggregated_opt}. Therefore we benchmark the time complexity against this method.

We optimise for varying population sizes, $K$, over time horizons, $n$, of different length, and plot the runtimes in Fig. \ref{fig:time_complexity_plot}.
For clarity, we have only plotted the runtime for optimising over the aggregated set for $K=8000$, as the optimisation problem is not dependent on the size of the population. Typical populations will orders of magnitude larger that this, and so from Fig. \ref{fig:time_complexity_plot} we see that the only tractable method for optimising the scheduling of large populations, over timesteps less than $60$, whilst guaranteeing optimality is by using the aggregation methods presented here. Furthermore, for the value of $n=48$, corresponding to the half hour day-ahead market, these methods can readily solve the problem.

% using the solver \cite{Domahidi2013ECOS:Systems} interfaced with CVXPY \cite{Agrawal2018AProblems}.

\begin{figure}[t]
  \centering
  \includegraphics[scale=1.0]{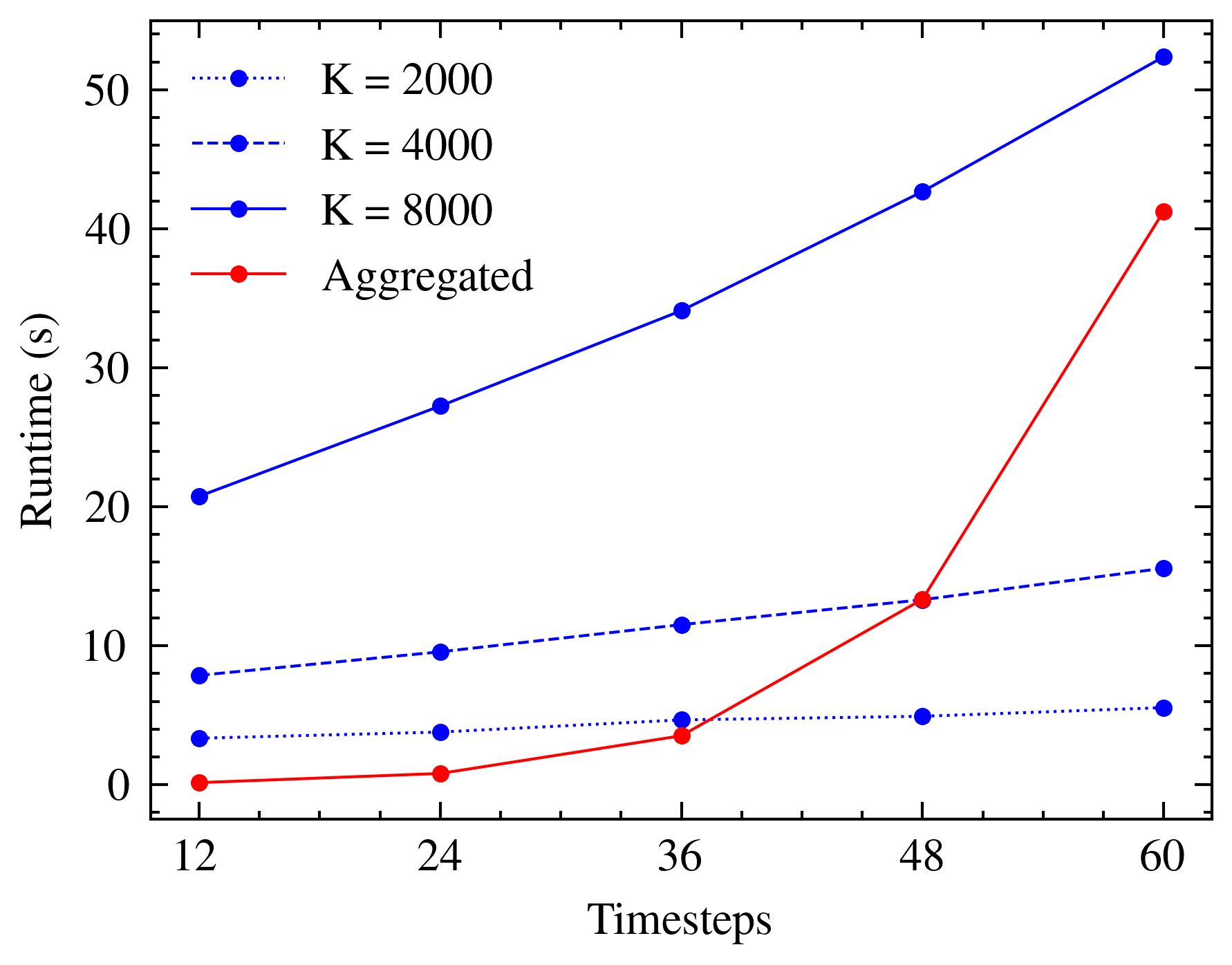}
  \caption{Runtimes for minimising the cost of charging for 
 a population of $K$ EVs. The runtime for solving the aggregated problem (red) is only plotted for $K=8000$, as this is constant in $K$.}
  \label{fig:time_complexity_plot}
\end{figure}

\subsection{Tracking generation signals}\label{subsec:tracking} 
In contrast with the exact characterisation provided in this paper, other methods in the literature have provided approximations of the aggregate flexibility set. 
To illustrate the benefit of an exact characterisation, we consider the task of tracking a generation signal. We simulate a population of $100$ EVs, with
arrival, and departure times sampled uniformly throughout a
24 hour time period. The power capacity, $m$, of each vehicle is sampled uniformly on the interval $[0.5,3]$kW and the required departure energy, $E$, is uniformly sampled from the set of values that permit the EV to be flexible. We benchmark our work with the zonotope \cite{Muller2019AggregationResources}, and homothet projection \cite{Zhao2016ExtractingApproximation}, approximation methods from the literature. Specifying $f$ to minimise the $l^2$ norm between the aggregate charging profile and a generation signal, we solve \eqref{eq:optimisation}. Fig. \ref{fig:reference_tracking_norms} shows the aggregate charging profile for each method along with the generation signal. We see aggregation method presented here is able to track the generation signal. Whilst the two approximation methods are unable to track the signal across the whole time horizon, showing that the (feasible) generation signal is not within their approximations.

% \begin{figure}[t]
%   \centering
%   \includegraphics[scale=1.0]{figures/generation_signal1-1.png}
%   \caption{Aggregate charging profile when tracking a generation signal. The two approximation methods from the literature fail to track the generation signal over the time horizon.}
%   \label{fig:reference_tracking_norms}
% \end{figure}

\section{CONCLUSIONS AND FUTURE WORK}\label{sec:conclusion}
In this paper we have introduced an exact characterisation of the aggregate flexibility of a population of charging EVs. We have presented a  method to optimise over this set, with complexity that is constant in the number of EVs and $O(n^4)$ in the length of the time horizon, $n$. Finally, we have described how an aggregate charging profile, obtained as the solution of an optimisation problem, could be disaggregated across the population. Using an energy arbitrage problem we have shown that our method is tractable for values of $n\sim60$,
even when optimising for large populations,
and thus is applicable to scenarios corresponding to the day-ahead market. We have benchmarked our approach with methods in the literature, and shown that it provides optimal solutions (rather than sub-optimal).

Future work will generalise the approach to practically relevant cases, including allowing the state of charge at departure to lie within a range of values, and allowing discharging. It is also  of interest to study uncertain arrival and departure times, and to adapt the method to include network constraints. Finally, there may be a further reformulation of the optimisation problem that would admit a further reduction of complexity. This would enable these methods to be tractable when optimising in markets with longer time horizons.

\begin{figure}[t]
  \centering
  \includegraphics[scale=1.0]{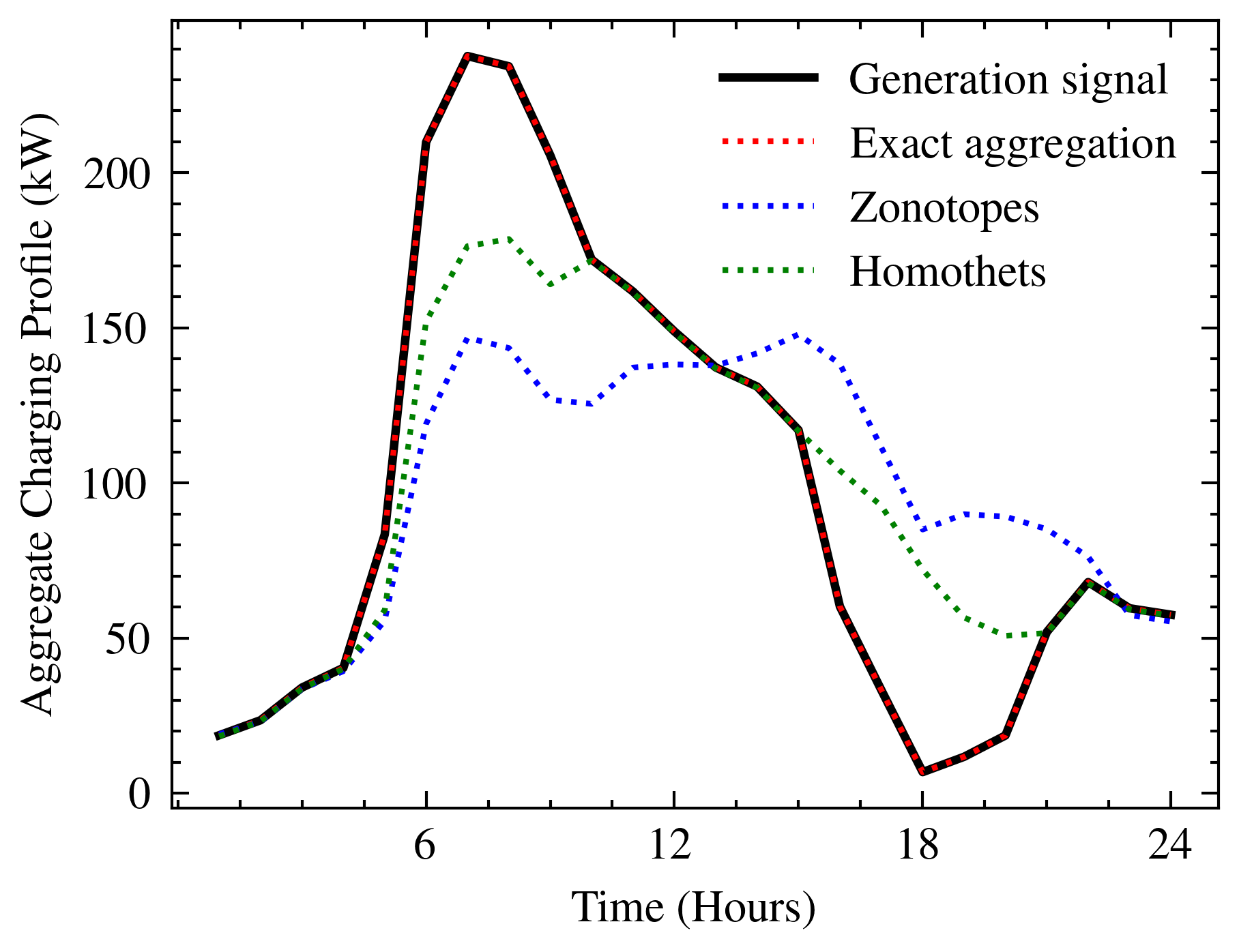}
  \caption{Aggregate charging profile when tracking a generation signal. The two approximation methods fail to track the generation signal over the time horizon.}
  \label{fig:reference_tracking_norms}
\end{figure}

                                  % on the last page of the document manually. It shortens
                                  % the textheight of the last page by a suitable amount.
                                  % This command does not take effect until the next page
                                  % so it should come on the page before the last. Make
                                  % sure that you do not shorten the textheight too much.

\bibliographystyle{IEEEtran}
\bibliography{references}

\appendix
\subsection{Heterogeneous Minkowski Sum}\label{appendix:heterogeneous_minkowski_sum_proof}

\begin{proof}
Let $\mathcal{X} \subset \mathbb{R}^n$ be a convex polytope, the \textit{support function} of $\mathcal{X}$ is defined by
$h(\mathcal{X}, y) := \textrm{sup} \left\{ \langle x, y \rangle \middle| x \in \mathcal{X} \right\}$ for all $y \in \mathbb{R}^n$.
Support functions are Minkowski additive \cite{Weibel2007MinkowskiComputation}:
\begin{equation}\label{eq:support_function_minkowski_additive}
    h(\mathcal{X} \oplus \mathcal{X}', y) =  h(\mathcal{X}, y) +  h(\mathcal{X}', y).
\end{equation}
For strictly monotone $y$: $y_1 > ... > y_n$, the support of the permutahedron $\Pi(v)$ and the permutations of $y$: $\left\{y_\pi \middle| \pi \in S_n \right\}$, is given by the inner product of $y$ with the vertex of $\Pi$ in the corresponding permutation:
\begin{equation}
    h\left(\Pi(v), (y_\pi)\right) = \langle  v_\pi , y_\pi \rangle.
\end{equation}
All vertices of $\Pi(v)$ can be expressed in this way. Furthermore, when $\Pi \subset \mathbb{R}^p$, $p \leq n$, it is clear that

\begin{equation}
    h\left(\hat{\Pi}(\nu^{a,d}), y_\pi\right) = \langle  \hat{\nu}^{a,d}_\pi , y_\pi \rangle,
\end{equation}
Using the Minkowski additive property of support functions from \eqref{eq:support_function_minkowski_additive} we can write 
\begin{align*}
    h\left(\biguplus_{a,d} \hat{\Pi}(\nu^{a,d}), y_\pi\right) 
        &= \sum_{a,d} h(\hat{\Pi}(\nu^{a,d}), y_\pi)\\
        &=\sum_{a,d} \left\langle  \hat{\nu}^{a,d}_\pi, y_\pi \right\rangle\\
        &=\left\langle \sum_{a,d}  \hat{\nu}^{a,d}_\pi, y_\pi \right\rangle\\
        % &=\left\langle \mu_\pi, y_\pi \right\rangle\\
\end{align*}
and so $\mu_\pi = \sum_{a,d} \hat{\nu}^{a,d}_\pi$ is a vertex of $\biguplus_{a,d} \hat{\Pi}(\nu^{a,d})$.
\end{proof}

\subsection{Disaggregation proof}\label{appendix:disaggregation_proof}
\begin{proof}
As stated in Sec. \ref{section:reformulation}: $\hat{A}^{a,d}\hat{v}^i \in \hat{\Pi}(v^i) \; \forall \hat{A}^{a,d} \in \hat{B}^{a,d}$,
from Theorem \ref{theorem:majorization_permutahedron}, $\hat{\Pi}(v^i) \equiv \mathcal{P}^i$ therefore \eqref{eq:disaggregation_criteria_feasible} holds. We can verify this mapping preserves the optimal aggregate charging profile:
\begin{align*}
    \sum_i^K u^i &= \sum_{a,d} \sum_{i\in \mathbb{A}^{a,d}} \hat{A}^{a,d^*} \hat{v}^i \\
                 &= \sum_{a,d}  \hat{A}^{a,d^*} \hat{\nu}^{a,d} = x^*,\\
\end{align*}
where we have used the definition of $\nu^{a,d}$ from \eqref{eq:major_vertex_sum} in the last line, and so \eqref{eq:disaggregation_criteria_tracks} is satisfied.
\end{proof}

% \addtolength{\textheight}{-12cm} 

%%%%%%%%%%%%%%%%%%%%%%%%%%%%%%%%%%%%%%%%%%%%%%%%%%%%%%%%%%%%%%%%%%%%%%%%%%%%%%%%

\end{document}